\newcommand{\newc}{\newcommand}
\newc{\beq}{\begin{equation}}
\newc{\eeq}{\end{equation}}
\newc{\bea}{\begin{array}}
\newc{\eea}{\end{array}}
\newcommand{\ben}{\begin{eqnarray}}
\newcommand{\een}{\end{eqnarray}}
\newc{\ra}{\rightarrow}
\newc{\bfx}{{\bf x}}
\newc{\bfV}{{\bf V}}
\newc{\cO}{{\cal O}}
\newc{\bfv}{{\bf v}}
\newc{\bfu}{{\bf u}}
\newc{\bfp}{{\bf p}}
\newc{\ve}{{\varepsilon}}
\newc{\Psibar}{\overline\Psi}
\newc{\w}{{\bf w}}
\newc{\E}{{\mathbf{E}}}
\newc{\EE}{{\mathcal E}}
\newc{\bfn}{{\mathbf\nabla}}
\newc{\la}{{\cal L}}
\newc{\tla}{{\tilde{\cal L}}}
\newc{\bp}{{\bf p}}
\newc{\ho}{\hookrightarrow }
\newc{\bP}{{\bf P}}
\newc{\pd}{{\partial}}
\newc{\piv}{{\partial_4}}
\newc{\pv}{{\partial_5}}
\newc{\bJ}{{\bf J}}
\newc{\bze}{{\mathbf 0}}
\newc{\bK}{{\bf K}}
\newc{\tphi}{{\tilde\phi}}
\newc{\tF}{{\tilde F}}
\newc{\tD}{{\tilde D}}
\newc{\tJ}{{\tilde J}}
\newc{\tj}{{\tilde j}}
\newc{\bD}{{\bf D}}
\newc{\tvphi}{{\tilde\varphi}}
\newc{\trho}{{\tilde\rho}}
\newc{\ttheta}{{\tilde\theta}}
\newc{\tpsi}{{\tilde\psi}}
\newc{\tu}{{\tilde u}}
\newc{\cD}{{\cal D}}
\newc{\tPhi}{{\tilde\Phi}}
\newc{\tPsi}{{\tilde\Psi}}
\newc{\tA}{{\tilde A}}
\newc{\talpha}{{\tilde\alpha}}
\newc{\tbeta}{{\tilde\beta}}
\newc{\bA}{{\mathbf A}}
\newc{\bB}{{\bf B}}
\newc{\br}{{\bf r}}
\newc{\sig}{{\mathbf\sigma}}
\newc{\eg}{{\rm e.g.\ }}
\newc{\ie}{{\rm i.e.\ }}
\newcommand{\bey}{\begin{eqnarray}}
\newcommand{\pslash}{\not{\hbox{\kern-2.3pt $p$}}}
\newcommand{\pdslash}{\not{\hbox{\kern-2pt $\partial$}}}
\newcommand{\eey}{\end{eqnarray}}
\newtheorem{theorem}{Theorem}
\newenvironment{proof}[1][Proof]{\noindent\textbf{#1.} }{\ \rule{0.5em}{0.5em}}
\newtheorem{defn}{Definition}
\begin{document}

\begin{titlepage}
\vskip 2cm
\begin{center}
{\Large  A general formulation based on algebraic spinors for the quantum computation
{\let\thefootnote\relax\footnote{{$^{\star}${\tt matrindade@uneb.br}}}}
}
 \vskip 10pt
{  Marco A. S. Trindade$^{1 \star}$, S. Floquet$^{2}$ and J.D.M. Vianna$^{3,4}$  \\}
\vskip 5pt
{\sl $^1$Colegiado de F\'isica, Departamento de Ci\^encias Exatas e da Terra, Universidade do Estado da Bahia,
 Brazil\\
$^2$Colegiado de Engenharia Civil, Universidade Federal do Vale do S\~ao Francisco, Brazil \\
$^3$Instituto de F\'isica, Universidade Federal da Bahia, Brazil \\
$^4$Instituto de F\'isica, Universidade de Bras\'ilia, Brazil}
\vskip 2pt

\end{center}

\renewcommand\abstractname{Abstract}

\begin{abstract}

In this work we explore the structure of Clifford algebras and the representations of the algebraic spinors in quantum information theory.
Initially we present an general formulation through elements of left minimal ideals in tensor products of the Clifford algebra $Cl^{+}_{1,3}$. Posteriorly we perform some applications in quantum computation: qubits, entangled states, quantum gates, representations of the braid group, quantum teleportation, Majorana operators and supersymmetry.
Finally, we discuss advantages related to standard Hilbert space formulation.
\end{abstract}

\bigskip

{\it Keywords:} algebraic spinors, Majorana operators, supersymmetry.

{\it PACS:} 11.30 Pb; 03.65. Fd; 11.10.wx; 74.25-q

\vskip 3pt

\end{titlepage}


\newpage

\setcounter{footnote}{0} \setcounter{page}{1} \setcounter{section}{0} %
\setcounter{subsection}{0} \setcounter{subsubsection}{0}

\section{Introduction}

The theory of quantum information and quantum computation \cite {Nielsen} has aroused great interest because of the possibility of the quantum computer to perform tasks that are not feasible for a classical computer \cite{Preskill}. Such theory is usually formulated in Hilbert's space language; notwithstanding alternative formulations and techniques based on Clifford algebras have been proposed \cite{Havel1, Havel2, Baylis}. Advantages can be perceived through the possibility of a clear geometric interpretation of the concepts involved and the unified in a natural language for the description of multipartite quantum systems. By way of example, Vlasov \cite{Vlasov} has shown that Clifford algebras can be used to construct universal sets of quantum logic gates. The method explores the relationship between Clifford algebras and Lie $ u (n) $ algebras of the unitary group. We can notice explicitly that this proposal is more advantageous than other matrix approaches, when the Hamiltonian of the system is known, which corresponds to the usual cases. Cabrera \emph{et al} \cite{Cabrera} have shown that Clifford algebras enable the elaboration of more efficient algorithms for quantum state engineering also by appropriating the relationships between Clifford and Lie algebras. It is noteworthy that \cite{Cabrera}, unlike the above, q-bits are described in purely algebraic terms, providing the operational efficiency of the algorithm. In \cite{Alves} Alves and Lavor showed that the use of Clifford algebra in the Grover algorithm simplifies the computation of its computational complexity. Josza and Miyake \cite{Josza} used the Jordan-Wigner representation of a Clifford algebra to obtain efficient classical simulations of quantum circuits, called Gaussian quantum circuits. Clifford algebras have been employed for the implementation of the quantum Fourier transform facilitating the description of a sequence of pulses associated with a given quantum logic gate \cite{Wein}. In the reference \cite{Havel1} operatorial spinors are used in the construction of qubits and various nuances of quantum information processing within the scope of nuclear magnetic resonance are analyzed, including decoherence. \

A plethora of theoretical proposals has been launched  with the aim of obtaining more realistic models that are able to mitigate decoherence\cite{Nielsen, Fujji}, one of the biggest obstacles to the experimental implementation of a quantum computer. One of the most compelling proposals is topological quantum computation \cite{topological1, topological2, topological3, topological4} whose experimental implementation is related to particles called anyons \cite{topological5}, which have an intermediate statistic between bosons and fermions. In this context, particles called Majorana fermions have been used in the experimental proposals. Its description in Hilbert's standard space formulation involves the use of Clifford algebras with the operators of Majorana \cite{Lee, Yu,Lombardo}.

 In this work our focus is under  perspective of  $Cl_{1,3}\simeq Cl_{3,0}$ algebra using the tensor product of algebras and minimal lefts.
Unlike the above references, we use algebraic \cite{Havel1, Havel2} (and non-operatorial) spinors in the Clifford algebra $ Cl_ {1,3} ^ {+} $ as the guiding thread for the description of quantum information. We explore the connections between braid group as well aspects related to supersymmetry, that have implications in the Hamiltonian properties.  Firstly, in \ref{sec2} section we present some basic concepts concerning Clifford's theory of algebras. In section \ref{sec3} , we release a general algebraic formulation for multipartite q-bit systems in terms of the algebraic spinors in $ Cl_ {1,3} $. In section \ref{sec4}, topics such as braid group and quantum teleportation were covered. In section \ref{sec5}, we show how Majorana operators can be constructed in the $ Cl_ {1,3} ^ {+} $ algebra. We present a supersymmetry algebra structure $ N = 2 $ and show how the existence of this algebra implies the degeneracy of the eigenstates of the Hamiltonian. Finally, in section \ref{sec6}, we have the conclusions and perspectives.

\section{Preliminaries} \label{sec2}

In this section, we review some basic notions about Clifford algebras, by using reference \cite{Vaz}. \

 Let $V$ a space vector over $\mathbb{R}$ equipped with symmetric bilinear form $g$ and $A$ an associative algebra with unit $1_{A}$ and let $\gamma$ be the linear application $\gamma:V\rightarrow A$. The pair $(A, \gamma)$ is a Clifford algebra for the quadratic space $(V,g)$ when $A$ is generated as an algebra by $\{\gamma(v); v \in V\}$, $\{a 1_{A}; a \in R\}$ and satisfy
\begin{eqnarray}
\gamma(v)\gamma(u)+\gamma(u)\gamma(v)=2g(v,u)1_{A},
\end{eqnarray}
for all $v,u \in V$. Consider $V$ the vector space $\mathbb{R}^{n}$ and $g$ a symmetric bilinear form in $\mathbb{R}^{n}$ of signature $(p,q)$ with $p+q=n$.
We called this quadratic space by $Cl_{p,q}=Cl(R^{p,q})$. The even subalgebra is defined by:
\begin{eqnarray}
Cl^{+}_{p,q}=\{\Gamma \in Cl_{p,q} ; \Gamma = \# \Gamma= \widehat{\Gamma} = (-1)^{p}\Gamma_{p}\},
\end{eqnarray}
 where $\#()$ and $\widehat{\ . \ }$ denote graded involution, that keep the sign of the elements belonging to even subspaces. 
Other definitions used refer to groups $spin(p,q)$ and $spin_{+}(p,q)$, given by:
\begin{eqnarray}
spin(p,q)=\{a \in Cl_{p,q}^{+}; N(a)=\pm 1 \}
\end{eqnarray}
and
\begin{eqnarray}
spin_{+}(p,q)=\{a \in Cl_{p,q}^{+}; N(a)= 1 \},  \label{eq4}
\end{eqnarray}
respectively, wherein $N(a)=\mid a \mid ^{2}=\langle \widetilde{a} a \rangle_{0}$ is related to norm of elements of Clifford algebra,  $\widetilde{\ . \ }$ represent the reversion operator defined by
$\widetilde{A}_{[k]}=(-1)^{k(k-1)/2}A_{[k]};$   $\langle \ \ \rangle_{k}:Cl_{p,q}\rightarrow \bigwedge_{k}(\mathbb{R}_{p,q})$, with $\bigwedge_{k}(\mathbb{R}_{p,q})$ is the exterior algebra of vector space $\mathbb{R}_{p,q}$. \\

In this article, we will use Clifford algebra $Cl_ {1,3}$, commonly called spacetime algebra, defined by:
\begin{eqnarray}
\gamma_{\mu}\gamma_{\nu}+\gamma_{\nu}\gamma_{\mu}=2g_{\mu \nu},
\end{eqnarray}
 where $g_{\mu \nu}$ is the Minkowiski metric with signature $(+---)$. Consequently, $\gamma_{\mu}\gamma_{\nu}=-\gamma_{\nu}\gamma_{\mu}, \mu \neq \nu$ and $\gamma_{0}^{2}=1, \gamma_{1}^{2}=\gamma_{2}^{2}= \gamma_{3}^{2}=-1$. An arbitrary element $\Gamma \in Cl_{1,3}$ can be written as a linear combination of the vectors belonging to the bases $\{1\}$ (scalar), $\{\gamma_{0}, \gamma_{1}, \gamma_{2}, \gamma_{3}\}$ (vectores), $\{\gamma_{0}\gamma_{1}, \gamma_{0}\gamma_{2}, \gamma_{0}\gamma_{3}, \gamma_{1}\gamma_{2}, \gamma_{1}\gamma_{3}, \gamma_{2}\gamma_{3}\}$, (bivectors),
$\{\iota \gamma_{0}, \iota \gamma_{1}, \iota \gamma_{2}, \iota \gamma_{3}\}$ (pseudovectors) and $\{\iota \equiv \gamma_{0} \gamma_{1} \gamma_{2} \gamma_{3}\}$ (pseudoscalar). \

This algebra is adequate to describe relativistic quantum mechanics without the need for imaginary numbers and with a natural geometric interpretation \cite{Hestenes}. Dirac equation can be written as \cite{Hestenes}
\begin{eqnarray}
\nabla \psi \iota \gamma_{3}\gamma_{0}=m \psi \gamma_{0},
\end{eqnarray}
where $\nabla= \gamma^{\mu}\partial_{\mu}$. Due to this fact, we will use this algebra as a starting point for future developments in relativistic quantum information theory. \

Let $A$ be an algebra. A vector subspace $I_{E}\subset A$ is a left ideal if $AI_{E} \subset I_{E}$ and similarly, a vector subspace $I_{D}\subset A$ is a right ideal if $I_{D}A \subset I_{D}$. An element $f \in A$ is an idempotent if $f^{2}=f$ and $f \neq 0$. An idempotent is primitive if there are no idempotents $f_{1}$ and $f_{2}$ such as $f_{1}f_{2}=0$ and $f_{1}+f_{2}=f$. It is possible that minimal left ideals in the Clifford algebra $Cl_{p,q}$ may be built with the product $Cl_{p, q}f$, where $f$ is a primitive idempotent. Algebraic spinors are elements of minimal left ideals. In the next section we will show how to describe quantum states in terms of algebraic spinors. Some works \cite{Havel1, Havel2} describe q-bits in terms of operatorial spinors, as highlighted in the introduction. Our choice is because we understand that the use of algebraic spinors provides a more efficient way of describing quantum mechanical structures. The transposition into usual Hilbert's space language of space is also more natural and can be easily perceived in terms of a column matrix representation of the algebra by using only a column.

\section{General Results}

In this section we will show some results about the universality of tensor product of Clifford algebras based on reference \cite{Vaz,chevalley} and the identification of a structure of $c^{\star}$-algebra associated to operators in $Cl_{1,3}^{+}$.

Let $V$ be a vector space over $\mathbb{R}$ with the symmetric bilinear form $g$, $C$ an unital associative algebra and $\gamma$, a linear mapping $\gamma : V \otimes V \otimes \cdots \otimes V \rightarrow C$

\begin{defn}
The pair $(C,\rho)$ is a product algebra for the quadratic space $(V \otimes V \otimes \cdots \otimes V \equiv V^{\otimes n}, g)$, when $C$ is generated by $\left\lbrace \rho(v_{1}\otimes v_{2}\otimes \cdots \otimes v_{n}); \right.$ $\left. v_{1}\otimes v_{2} \otimes \right.$ $\left. \cdots \otimes  v_{n}  \in V^{\otimes n} \right\rbrace$ and $\rho$ satisfies
\begin{eqnarray}
 \rho(v_{1}\otimes v_{2}\otimes \cdots \otimes v_{n})  \rho(u_{1}\otimes u_{2}\otimes \cdots \otimes u_{n}) \qquad \qquad \qquad \nonumber \\
 + (-1)^{\Delta^{(n)}} \rho(v_{1}\otimes v_{2}\otimes \cdots \otimes v_{n})  \rho(u_{1}\otimes u_{2}\otimes \cdots \otimes u_{n}) \nonumber \\
  \qquad \qquad \qquad =   g(v_{1}\otimes v_{2}\otimes \cdots \otimes v_{n}, u_{1}\otimes u_{2}\otimes \cdots \otimes u_{n}),
\end{eqnarray}
where $\Delta^{(n)}$ is define using a vector basis
$\left\lbrace e_{i_{1}}\otimes e_{i_{2}}\otimes \cdots \otimes e_{i_{n}} \right \rbrace $  of $V^{\otimes n}$ with 
\begin{eqnarray}
\rho(e_{i_{1}}\otimes e_{i_{2}}\otimes \cdots \otimes e_{i_{n}})  \rho(e_{j_{1}}\otimes e_{j_{2}}\otimes \cdots \otimes e_{j_{n}})  \qquad \qquad \qquad \nonumber \\
+(-1)^{\Delta^{(n)}}\rho(e_{j_{1}}\otimes e_{j_{2}}\otimes \cdots \otimes e_{j_{n}})  \rho(e_{i_{1}}\otimes e_{i_{2}}\otimes \cdots \otimes e_{i_{n}}),
\end{eqnarray}
so $\Delta^{(n)}=0$ if $n$ is even and $i_{k}=j_{k}$  an odd number of times or if $n$ is odd and $i_{k}=j_{k}$ an even number of times or still if $i_{k}=j_{k}$ for all $k$ in the product. Otherwise $\Delta^{(n)}=1$ and $g(,)$ is a symmetric bilinear form.
\end{defn}

Let us define a product algebra $Cl^{\otimes n}$ through quotient algebra product
\begin{eqnarray}
\frac{T(V^{\otimes n})}{I_{C}},
\end{eqnarray}
denoting
$\displaystyle \  T(V^{\otimes n})=\bigoplus_{q=0}^{\infty}T_{q}(V^{\otimes n})   $
the algebra of contravariant tensors with $ T_{q}(V^{\otimes n}) = (V^{\otimes n})^{\otimes q} $ and $I_{C}$ is a two-sided ideal generated by elements
\begin{eqnarray}
(v_{1}\otimes v_{2}\otimes \cdots \otimes v_{n}) \otimes (u_{1}\otimes u_{2}\otimes \cdots
\otimes u_{n})+ \nonumber \\ (-1)^{\Delta^{(n)}} (v_{1}\otimes v_{2}\otimes \cdots \otimes v_{n}) \otimes (u_{1}\otimes u_{2}\otimes \cdots \otimes u_{n}) \nonumber \\
- g(v_{1}\otimes v_{2}\otimes \cdots \otimes v_{n}, u_{1}\otimes u_{2}\otimes \cdots \otimes u_{n}) 1 ,
\end{eqnarray}
using $\Delta^{(n)}$ previously defined.

\begin{theorem} There is an homomorphism
\begin{eqnarray}
 \Upsilon \left( (Cl^{\otimes n}) \equiv \frac{T(V^{\otimes n} ) }{I_{C}} \right) & \rightarrow C,
\end{eqnarray}
such that $\rho= \Upsilon \circ \gamma $, where $\gamma: V^{\otimes n} \rightarrow Cl^{\otimes n} $.

\end{theorem}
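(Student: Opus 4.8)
The plan is to reproduce the classical universal-property argument for Clifford algebras in this tensor-power setting, splitting it into the universal property of the free (tensor) algebra followed by the universal property of a quotient. First I would use that $T(V^{\otimes n})$ is the free unital associative algebra on the vector space $V^{\otimes n}$: since $\rho : V^{\otimes n} \rightarrow C$ is linear and $C$ is unital associative, there is a unique algebra homomorphism $\bar{\rho} : T(V^{\otimes n}) \rightarrow C$ extending $\rho$ on the degree-one component $V^{\otimes n} \hookrightarrow T(V^{\otimes n})$. Explicitly, $\bar{\rho}$ sends a monomial $w_{1}\otimes\cdots\otimes w_{q}$ with each $w_{j}\in V^{\otimes n}$ to the product $\rho(w_{1})\cdots\rho(w_{q})$ in $C$, and sends $1$ to $1_{C}$.

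The key step is to check that $\bar{\rho}$ annihilates the two-sided ideal $I_{C}$. Because $\bar{\rho}$ is multiplicative, it suffices to verify this on the generators: any element of $I_{C}$ is a finite sum $\sum_{i} a_{i}\,x_{i}\,b_{i}$ with $x_{i}$ a generator and $a_{i},b_{i}\in T(V^{\otimes n})$, whence $\bar{\rho}\bigl(\sum_{i} a_{i} x_{i} b_{i}\bigr)=\sum_{i}\bar{\rho}(a_{i})\,\bar{\rho}(x_{i})\,\bar{\rho}(b_{i})$, which vanishes as soon as each $\bar{\rho}(x_{i})=0$. Applying $\bar{\rho}$ to a generator and using that $\bar{\rho}$ restricts to $\rho$ on $V^{\otimes n}$, I would obtain
\begin{align*}
\rho&(v_{1}\otimes\cdots\otimes v_{n})\,\rho(u_{1}\otimes\cdots\otimes u_{n}) \\ &+ (-1)^{\Delta^{(n)}}\rho(v_{1}\otimes\cdots\otimes v_{n})\,\rho(u_{1}\otimes\cdots\otimes u_{n}) \\ &- g(v_{1}\otimes\cdots\otimes v_{n},\, u_{1}\otimes\cdots\otimes u_{n})\,1_{C}=0,
\end{align*}
the vanishing being exactly the defining relation of the product algebra $(C,\rho)$ from the Definition. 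This is the crux of the argument: the generators of $I_{C}$ were built so that their $\bar{\rho}$-images coincide term by term with the product-algebra identity, so the relation closes the computation.

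Having established $\bar{\rho}(I_{C})=0$, I would invoke the universal property of the quotient. The canonical projection $\pi : T(V^{\otimes n}) \rightarrow T(V^{\otimes n})/I_{C}=Cl^{\otimes n}$ then induces a unique algebra homomorphism $\Upsilon : Cl^{\otimes n} \rightarrow C$ with $\Upsilon\circ\pi=\bar{\rho}$. Finally, since $\gamma$ is by construction the composite of the inclusion $V^{\otimes n}\hookrightarrow T(V^{\otimes n})$ with $\pi$, I would conclude $\Upsilon\circ\gamma=\Upsilon\circ\pi|_{V^{\otimes n}}=\bar{\rho}|_{V^{\otimes n}}=\rho$, which is precisely the asserted factorization.

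The argument is structurally routine once the two universal properties are invoked; the only genuine care lies in the sign bookkeeping, namely confirming that the exponent $\Delta^{(n)}$ attached to the generators of $I_{C}$ obeys the same combinatorial convention as the one in the product-algebra identity, so that the image of each generator matches the relation exactly. I would also remark that the homomorphism $\Upsilon$ is unique: because $Cl^{\otimes n}$ is generated as an algebra by the image of $\gamma$, any homomorphism satisfying $\Upsilon\circ\gamma=\rho$ is determined on generators and hence everywhere.
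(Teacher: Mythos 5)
Your proposal is correct and follows essentially the same route as the paper: extend $\rho$ multiplicatively to the tensor algebra, observe that the generators of $I_{C}$ are annihilated precisely because of the defining relation of the product algebra $(C,\rho)$, and descend to the quotient to obtain $\Upsilon$ with $\rho=\Upsilon\circ\gamma$. If anything, your version is the cleaner one: the paper first passes to $T(V^{\otimes n})/\mathrm{Ker}(\rho')$ and then relates it to $T(V^{\otimes n})/I_{C}$ via the inclusion $I_{C}\subseteq\mathrm{Ker}(\rho')$, whereas you quotient by $I_{C}$ directly and also record the uniqueness of $\Upsilon$, which the paper omits.
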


\begin{proof}
Extending the application $\rho$ to $\rho^{'}: T(V^{\otimes n})\rightarrow Cl^{\otimes n}$
\begin{eqnarray}
\hspace{-0.6cm}\rho^{'}(v_{1}^{(1)}\otimes...\otimes v_{1}^{(n)}\otimes...v_{k}^{(n)}\otimes ...\otimes v_{k}^{(n)})&=&\rho^{'}(v_{1}^{(1)}\otimes...\otimes v_{1}^{(n)})...\rho^{'}(v_{k}^{(1)}\otimes...\otimes v_{k}^{(n)}) \nonumber \\
&=&\rho(v_{1}^{(1)}\otimes...\otimes v_{1}^{(n)})...\rho(v_{k}^{(1)}\otimes...\otimes v_{k}^{(n)}) \nonumber.
\end{eqnarray}
Consider now the quotient space $\displaystyle \frac{T(V^{\otimes n})}{Ker (\rho^{'}) }$ and we will do $[\varsigma]=\pi(\varsigma)$, in which $[\varsigma]$, are the equivalence classes of quotient space $\varsigma \in T(V^{\otimes n})$ and $\pi:T(V^{\otimes n})\rightarrow T ( V^{\otimes n})/Ker(\rho^{'} ) $. Then, we can to define the application $\Upsilon: T ( V^{\otimes n})/Ker(\rho^{'}) \rightarrow Cl^{ \otimes n}$ given by  $\Upsilon([\varsigma])=\rho^{'}(\varsigma)$, $\forall \varsigma \in T(V^{\otimes n})$. Note that we have a homomorphism
\begin{eqnarray}
\Upsilon([\varsigma_{1}][\varsigma_{2}])&=&\Upsilon([\varsigma_{1} \otimes \varsigma_{2}]) \nonumber \\
&=&\rho^{'}(\varsigma_{1} \otimes \varsigma_{2}) \nonumber \\
&=&\Upsilon(\varsigma_{1})\Upsilon(\varsigma_{2}). \nonumber
\end{eqnarray}

In view of definition of applications $\rho$ and $\rho^{'}$, we have
\begin{eqnarray}
\rho^{'} \left( (v_{1}^{(1)}\otimes v_{1}^{(2)}\otimes...\otimes v_{1}^{(n)})(v_{2}^{(1)}\otimes v_{2}^{(2)}\otimes...\otimes v_{2}^{(n)}) \qquad \qquad   \right.  & & \nonumber \\
+(-1)^{\Delta^{(n)}}(v_{2}^{(1)}\otimes v_{2}^{(2)}\otimes...\otimes v_{2}^{(n)})(v_{1}^{(1)}\otimes v_{1}^{(2)}\otimes...\otimes v_{1}^{(n)})  & &  \nonumber \\
\left. -g(v_{1}^{(1)}\otimes v_{1}^{(2)}\otimes...\otimes v_{1}^{(n)},v_{2}^{(1)}\otimes v_{2}^{(2)}\otimes...\otimes v_{2}^{(n)}) \right)  & = & 0
\end{eqnarray}
Therefore $I_{C}\subseteq Ker(\rho^{'})$ and consequently $Cl^{\otimes n}\equiv T(V^{\otimes n})/I_{C} \subseteq T(V^{\otimes n})/Ker(\rho^{'}) $. If we restrict homomorphism $\Upsilon$ as $\Upsilon:Cl^{\otimes n} \rightarrow C $, we have that $\rho = \Upsilon \circ \gamma$, keeping in mind that $\Upsilon([\varsigma])=\Upsilon(\gamma(\varsigma))=\rho^{'}(\varsigma)=\rho(\varsigma)$ and the proof is finished.
\end{proof}

We can summarize this result in the following commutative diagram

\begin{center}
\begin{minipage}[t]{6cm}
 \xymatrix{
V^{\otimes n} \ar[r]^-{\gamma}  \ar[dr]_{\rho}&
Cl^{\otimes n}\equiv \frac{T(V^{\otimes n})}{I_{C}} \ar[d]^{\Upsilon} \\
& C
}
\end{minipage}
\end{center}

The previous theorem is related to tensor product of algebras wherein the generators may commute or anticommute among themselves. Particularly, we will use Clifford algebra $(Cl_{1,3}^{+})^{\otimes n}\equiv Cl_{1,3}^{+} \otimes  Cl_{1,3}^{+} \otimes \cdot \cdot \cdot  \otimes  Cl_{1,3}^{+} $, where $n$ is the number of factors.  Now we consider a Clifford algebra $Cl_{1,3}^{+}\simeq Cl_{3,0}$. Let $\Psi^{Cl_{1,3}^{+}}$ be an element of a minimal left ideal $Cl_{1,3}^{+}P$ and $\Psi^{\star}$ an element of minimal right ideal $P \widetilde{ Cl_{1,3}^{+}}$, where $P$ is a primitive idempotent. One define the operators $O^{Cl_{1,3}^{+}}:Cl_{1,3}^{+}P \rightarrow Cl_{1,3}^{+}P$. It is easy to verify that $\langle (\Psi^{Cl_{1,3}^{+}})^{\star}\Psi^{Cl_{1,3}^{+}} \rangle_{0}$ is an inner product. In fact,
\begin{eqnarray}
\langle (\Psi^{Cl_{1,3}^{+}})^{\star}\Psi^{Cl_{1,3}^{+}} \rangle_{0}\geq 0
\end{eqnarray}
and $\langle (\Psi^{Cl_{1,3}^{+}})^{\star}\Psi^{Cl_{1,3}^{+}} \rangle_{0}= 0$ if $\Psi^{Cl_{1,3}^{+}}=0$. Besides $\langle (\Psi^{Cl_{1,3}^{+}})^{\star}\Phi^{Cl_{1,3}^{+}} \rangle_{0}=\langle (\Phi^{Cl_{1,3}^{+}})^{\star}\Psi^{Cl_{1,3}^{+}} \rangle_{0}$ and linear property is satisfied.
\begin{theorem}
The operators $O^{Cl_{1,3}^{+}}$ form a $c^{\star}$-algebra.
\end{theorem}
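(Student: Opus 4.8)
The plan is to verify the $C^\star$-algebra axioms directly for the set of operators $O^{Cl_{1,3}^{+}}:Cl_{1,3}^{+}P\to Cl_{1,3}^{+}P$ acting on the minimal left ideal, treating this ideal as a complex inner-product space under the pairing $\langle(\Psi^{Cl_{1,3}^{+}})^{\star}\Psi^{Cl_{1,3}^{+}}\rangle_{0}$ already shown above to be an inner product. First I would establish the \emph{algebra} structure: the operators are endomorphisms of $Cl_{1,3}^{+}P$, so composition gives an associative product, addition and scalar multiplication are inherited pointwise, and the identity map is the unit. The key point is that this endomorphism algebra is finite-dimensional (since $Cl_{1,3}^{+}\simeq Cl_{3,0}$ has real dimension $8$, and $Cl_{1,3}^{+}P$ is a minimal left ideal), hence automatically complete in any norm, which disposes of the Banach-space (completeness) requirement for free.

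Next I would introduce the involution. The natural candidate is the adjoint $O\mapsto O^{\dagger}$ defined with respect to the inner product by the relation
\begin{eqnarray}
\langle(\Psi^{Cl_{1,3}^{+}})^{\star}\,(O\,\Phi^{Cl_{1,3}^{+}})\rangle_{0}=\langle((O^{\dagger}\Psi^{Cl_{1,3}^{+}}))^{\star}\,\Phi^{Cl_{1,3}^{+}}\rangle_{0}.
\end{eqnarray}
Existence and uniqueness of $O^{\dagger}$ follow from finite-dimensionality together with nondegeneracy of the inner product. I would then check the involution axioms, namely $(O^{\dagger})^{\dagger}=O$, $(O_{1}O_{2})^{\dagger}=O_{2}^{\dagger}O_{1}^{\dagger}$, antilinearity, and $(\mathrm{id})^{\dagger}=\mathrm{id}$; each reduces to a routine manipulation using the symmetry property $\langle(\Psi^{Cl_{1,3}^{+}})^{\star}\Phi^{Cl_{1,3}^{+}}\rangle_{0}=\langle(\Phi^{Cl_{1,3}^{+}})^{\star}\Psi^{Cl_{1,3}^{+}}\rangle_{0}$ and the defining relation for the adjoint.

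Finally I would define the operator norm $\|O\|=\sup\{\,\|O\Psi^{Cl_{1,3}^{+}}\|:\|\Psi^{Cl_{1,3}^{+}}\|\le 1\,\}$, where $\|\Psi^{Cl_{1,3}^{+}}\|^{2}=\langle(\Psi^{Cl_{1,3}^{+}})^{\star}\Psi^{Cl_{1,3}^{+}}\rangle_{0}$, and verify that it is submultiplicative and satisfies the $C^\star$-identity $\|O^{\dagger}O\|=\|O\|^{2}$. The cleanest route is to observe that the representation of $O^{Cl_{1,3}^{+}}$ on the finite-dimensional inner-product space $Cl_{1,3}^{+}P$ is a faithful $\star$-homomorphism into a matrix algebra with the standard conjugate-transpose involution, so that the $C^\star$-identity is inherited from the well-known matrix case; alternatively one proves $\|O\|^{2}=\|O^{\dagger}O\|$ via the spectral/variational characterization $\|O\|^{2}=\sup_{\|\Psi\|\le1}\langle(O\Psi)^{\star}(O\Psi)\rangle_{0}=\sup_{\|\Psi\|\le1}\langle\Psi^{\star}(O^{\dagger}O\Psi)\rangle_{0}$.

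The main obstacle I expect is not any single axiom but establishing cleanly that the bracket $\langle(\Psi^{Cl_{1,3}^{+}})^{\star}\Phi^{Cl_{1,3}^{+}}\rangle_{0}$ is a genuine \emph{positive-definite Hermitian} inner product making $Cl_{1,3}^{+}P$ a complex Hilbert space compatible with an adjoint: the scalar part $\langle\ \rangle_{0}$ is real-valued, so one must identify the correct complex structure (for instance the commuting pseudoscalar $\iota$ supplying the imaginary unit) under which the form becomes sesquilinear, positive definite, and nondegenerate. Once that identification is fixed, the existence of a faithful matrix $\star$-representation is immediate and every remaining $C^\star$-axiom follows from the matrix algebra, so the conceptual weight of the proof sits entirely in correctly setting up this inner-product/complex-structure reduction rather than in the subsequent verifications.
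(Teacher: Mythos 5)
Your proposal follows the same overall skeleton as the paper's proof --- build the operator norm from the scalar-part inner product $\langle(\Psi^{Cl_{1,3}^{+}})^{\star}\Psi^{Cl_{1,3}^{+}}\rangle_{0}$, define an involution as the adjoint with respect to that inner product, and check the $C^{\star}$-axioms --- but it differs in how the hard points are handled, and in each case your version is the more careful one. The paper identifies the involution concretely as the reversion, $O^{\star Cl_{1,3}^{+}}=\widetilde{O^{Cl_{1,3}^{+}}}$, and then simply \emph{lists} the axioms $((O)^{\star})^{\star}=O$, $(O_{1}O_{2})^{\star}=O_{2}^{\star}O_{1}^{\star}$ and, crucially, $\parallel OO^{\star}\parallel=\parallel O\parallel^{2}$ as holding, verifying explicitly only the triangle inequality for the norm; completeness is never mentioned. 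You instead obtain the adjoint abstractly from nondegeneracy of the form, get completeness for free from finite-dimensionality, and derive the $C^{\star}$-identity from a faithful $\star$-representation into a matrix algebra (or the variational characterization of the norm) --- precisely the step the paper asserts without argument. You also flag a genuine issue the paper sidesteps: the paper takes $(\lambda O)^{\star}=\lambda O^{\star}$ with $\lambda\in\mathbb{R}$ only, so what it literally constructs is a real $\star$-algebra, whereas a $C^{\star}$-algebra in the standard sense requires the complex structure (e.g.\ via the commuting pseudoscalar $\iota$) that you correctly identify as the place where the conceptual weight sits. In short, your route and the paper's are the same in outline, but yours supplies the justifications (completeness, the $C^{\star}$-identity, the complex structure on the ideal) that the paper leaves implicit, at the cost of not exhibiting the involution as concretely as the paper's reversion operator does.
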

\begin{proof}
We may define the norm operator
\begin{eqnarray}
\hspace{-1.5cm}\parallel O^{Cl_{1,3}^{+}} \parallel _{\Psi^{Cl_{1,3}^{+}}}&=&\sup_{\Psi_{Cl_{1,3}^{+}}} \left\lbrace \left(\langle (O^{Cl_{1,3}^{+}} \Psi^{Cl_{1,3}^{+}})^{\star}(O^{Cl_{1,3}^{+}}\Psi^{Cl_{1,3}^{+}}) \rangle_{0}\right)^{1/2}, \right. 
\left. (\langle (\Psi^{Cl_{1,3}^{+}})^{\star}\Psi^{Cl_{1,3}^{+}} \rangle_{0})^{1/2} \ \leq \ 1 \right\rbrace  \nonumber \\
\end{eqnarray}
We have that for $O_{1+2}^{Cl_{1,3}^{+}} \equiv O_{1}^{Cl_{1,3}^{+}}+O_{2}^{Cl_{1,3}^{+}}$

\begin{eqnarray}
\hspace{-1.8cm}\parallel (O_{1+2}^{Cl_{1,3}^{+}}) \parallel _{\Psi^{Cl_{1,3}^{+}}}&=&\sup_{\Psi_{Cl_{1,3}^{+}}} \left\lbrace \left(\langle (O_{1+2}^{Cl_{1,3}^{+}} \Psi^{Cl_{1,3}^{+}})^{\star}(O_{1+2}^{Cl_{1,3}^{+}}\Psi^{Cl_{1,3}^{+}}) \rangle_{0}\right)^{1/2}, 
(\langle (\Psi^{Cl_{1,3}^{+}})^{\star}\Psi^{Cl_{1,3}^{+}} \rangle_{0})^{1/2} \ \leq \ 1 \right\rbrace \nonumber \\
\hspace{-1.8cm}  & \leq &   \sup_{\Psi_{Cl_{1,3}^{+}}} \left\lbrace \left(\langle (O_{1}^{Cl_{1,3}^{+}} \Psi^{Cl_{1,3}^{+}})^{\star}(O_{1}^{Cl_{1,3}^{+}}\Psi^{Cl_{1,3}^{+}}) \rangle_{0}\right)^{1/2} \right. \nonumber \\
\hspace{-1.8cm} & & \left. + \left(\langle (O_{2}^{Cl_{1,3}^{+}} \Psi^{Cl_{1,3}^{+}})^{\star}(O_{2}^{Cl_{1,3}^{+}}\Psi^{Cl_{1,3}^{+}}) \rangle_{0}\right)^{1/2}, 
(\langle (\Psi^{Cl_{1,3}^{+}})^{\star}\Psi^{Cl_{1,3}^{+}} \rangle_{0})^{1/2} \ \leq \ 1 \right\rbrace
\nonumber \\
\hspace{-1.8cm} &\leq & \sup_{\Psi_{Cl_{1,3}^{+}}} \left\lbrace \left(\langle (O_{1}^{Cl_{1,3}^{+}} \Psi^{Cl_{1,3}^{+}})^{\star}(O_{1}^{Cl_{1,3}^{+}}\Psi^{Cl_{1,3}^{+}}) \rangle_{0}\right)^{1/2} ,
(\langle (\Psi^{Cl_{1,3}^{+}})^{\star}\Psi^{Cl_{1,3}^{+}} \rangle_{0})^{1/2} \ \leq \  1 \right\rbrace \nonumber \\
\hspace{-1.8cm} & & + \sup_{\Psi_{Cl_{1,3}^{+}}} \left\lbrace \left(\langle (O_{2}^{Cl_{1,3}^{+}} \Psi^{Cl_{1,3}^{+}})^{\star}(O_{2}^{Cl_{1,3}^{+}}\Psi^{Cl_{1,3}^{+}}) \rangle_{0}\right)^{1/2},
(\langle (\Psi^{Cl_{1,3}^{+}})^{\star}\Psi^{Cl_{1,3}^{+}} \rangle_{0})^{1/2} \ \leq \ 1 \right\rbrace \nonumber \\
\hspace{-1.8cm} &=&\parallel (O_{1}^{Cl_{1,3}^{+}}) \parallel _{\Psi^{Cl_{1,3}^{+}}} + \parallel (O_{2}^{Cl_{1,3}^{+}}) \parallel _{\Psi^{Cl_{1,3}^{+}}}.
\end{eqnarray}

The other properties that define the norm can be easily verified. We can see that $O^{\star Cl_{1,3}^{+}}=\widetilde{O^{Cl_{1,3}^{+}}}$ 
that satisfies the condition
\begin{eqnarray}
\langle (O^{ Cl_{1,3}^{+}}\Phi^{ Cl_{1,3}^{+}})^{\star}\Psi^{ Cl_{1,3}^{+}} \rangle_{0}= \langle (\Phi^{ Cl_{1,3}^{+}})^{\star} O^{\star Cl_{1,3}^{+}}\Psi^{ Cl_{1,3}^{+}} \rangle_{0}.
\end{eqnarray}
 Notice the reversion is an involution that satisfies the axioms of an $c^{\star}$-algebra. In fact,
\begin{eqnarray}
((O^{Cl_{1,3}^{+}})^{\star})^{\star} & = & O^{Cl_{1,3}^{+}},  \nonumber \\
(O_{1}^{Cl_{1,3}^{+}}+O_{2}^{Cl_{1,3}^{+}})^{\star} & = & (O_{1}^{Cl_{1,3}^{+}})^{\star}+(O_{2}^{Cl_{1,3}^{+}})^{\star} \nonumber \\
(O_{1}^{Cl_{1,3}^{+}}O_{2}^{Cl_{1,3}^{+}})^{\star} & = & (O_{2}^{Cl_{1,3}^{+}})^{\star}(O_{1}^{Cl_{1,3}^{+}})^{\star}, \nonumber \\
\parallel O^{Cl_{1,3}^{+}} O^{\star Cl_{1,3}^{+}} \parallel & = & \parallel O^{Cl_{1,3}^{+}} \parallel ^{2} \nonumber \\
(\lambda O^{Cl_{1,3}^{+}})^{\star} & = & \lambda(O^{Cl_{1,3}^{+}})
\end{eqnarray}
with $\lambda \in \mathbb{R}$.  
Therefore we have an $c^{\star}$-algebra of operators on the space of left minimal ideals generated by $\langle \gamma_{3}\gamma_{0}P, \gamma_{1}\gamma_{0}P, \iota\gamma_{3}\gamma_{0}P, \iota\gamma_{1}\gamma_{0}P  \rangle$
\end{proof}

We can extend these results to the tensor product of algebras $Cl_{1,3}^{+}$.

\section{Algebraic formulation} \label{sec3}

In order to establish a relationship with the qubits, we will explore the isomorphism $Cl_{3,0}\simeq Cl^{+}_{1,3}$, $\zeta: Cl_{3,0} \rightarrow Cl^{+}(1,3)$, given by $\zeta(\sigma_{\mu})=\gamma_{\mu}\gamma_{0}$, on what $\sigma_{\mu}$ are generators of Clifford algebra $Cl_{3,0}$.
Then consider the primitive idempotents
\begin{eqnarray}
E & = & \frac{1}{2}(1+\sigma_{3}) \ \in Cl_{3,0} \nonumber \\
P & = & \frac{1}{2}(1+\gamma_{3}\gamma_{0}) \ \in Cl^{+}_{1,3} .
\end{eqnarray}
Qubits can be identified in these algebras as
\begin{equation}
 \begin{array}{rclcl}
|0 \rangle   & \leftrightarrow & \sigma_{3}E   &  \leftrightarrow  &  \gamma_{3}\gamma_{0}P      \\
i|0 \rangle  & \leftrightarrow & \sigma_{1}\sigma_{2} E   &  \leftrightarrow  & \iota\gamma_{3} \gamma_{0}   P \\
 |1 \rangle  & \leftrightarrow & \sigma_{1}E    &  \leftrightarrow  & \gamma_{1}\gamma_{0}P \\
 i|1 \rangle & \leftrightarrow & \sigma_{2}\sigma_{3}  E &  \leftrightarrow  & \iota \gamma_{1} \gamma_{0}  P   ,
\end{array} \label{eq-rel}
\end{equation}
considering $\zeta^{'}:C^{2}\rightarrow Cl^{+}_{1,3}P$, given by $\iota = \gamma_{1}\gamma_{0}\gamma_{2}\gamma_{0}\gamma_{3}\gamma_{0}$, resulting in $\iota \gamma_{3}\gamma_{0} = \gamma_{1}\gamma_{0}\gamma_{2}\gamma_{0}$ and $\iota \gamma_{1}\gamma_{0}=\gamma_{2}\gamma_{0}\gamma_{3}\gamma_{0}$.

A general state of a qubit can be written as an algebraic spinor in $Cl_{1,3}^{+}$, i.e., as an element of an ideal left minimal algebra $Cl_{1,3}^{+}P$:
\begin{eqnarray} \label{estadoB}
\Psi &=&(\alpha_{1}\gamma_{3}\gamma_{0}+\alpha_{2}\iota\gamma_{3}\gamma_{0}+\alpha_{3}\gamma_{1}\gamma_{0}+\alpha_{4}\iota\gamma_{1}\gamma_{0})P \nonumber \\
&=&(\alpha_{1}\gamma_{3}\gamma_{0}+\alpha_{2}\gamma_{1}\gamma_{0}\gamma_{2}\gamma_{0}+\alpha_{3}\gamma_{1}\gamma_{0}+\alpha_{4}\gamma_{2}\gamma_{0}\gamma_{3}\gamma_{0})\frac{1}{2}(1+\gamma_{3}\gamma_{0}),
\end{eqnarray}
with $\alpha_{j} \in \mathbb{R}, \ j=\left\lbrace 1,2,3,4 \right\rbrace $, corresponding to state
\begin{eqnarray}
| \Psi \rangle = (\alpha_{1}+i \alpha_{2})|0 \rangle + (\alpha_{3}+i \alpha_{4})|1 \rangle.
\end{eqnarray}

The corresponding bra is given by:
\begin{eqnarray}
\Psi^{\star} &=&P(\alpha_{1}\gamma_{3}\gamma_{0}-\alpha_{2}\iota\gamma_{3}\gamma_{0}+\alpha_{3}\gamma_{1}\gamma_{0}-\alpha_{4}\iota\gamma_{1}\gamma_{0}) \nonumber \\
&=&\frac{1}{2}(1+\gamma_{3}\gamma_{0})(\alpha_{1}\gamma_{3}\gamma_{0}-\alpha_{2}\gamma_{1}\gamma_{0}\gamma_{2}\gamma_{0}+\alpha_{3}\gamma_{1}\gamma_{0}-\alpha_{4}\gamma_{2}\gamma_{0}\gamma_{3}\gamma_{0}),
\end{eqnarray}
with $\alpha_{j} \in \mathbb{R}, \ j= \left\lbrace 1,2,3,4 \right\rbrace $, associated to state
\begin{eqnarray}
\langle \Psi | =  \langle 0 |(\alpha_{1}+i \alpha_{2}) + \langle 1|(\alpha_{3}+i \alpha_{4}).
\end{eqnarray}

For the description of multipartite systems we need to build the product algebra $\left[Cl_{1,3}^{+}\right]^{\otimes n}$ in according Theorem 1. Also easy to verify vector space isomorphism analogously to reference \cite{Vaz}:
\begin{eqnarray}
Cl^{+ \otimes n}_{1,3}(V^{\otimes n},g)\simeq_{V} \bigotimes_{k=1}^{n} \left(\bigwedge(V)\right)
\end{eqnarray}

We can write multipartite q-bits as tensor products of algebraic spinors in $Cl_{1,3}^{+}$, i.e. as elements of left minimal ideals in $\left( Cl_{1,3}^{+}\right)^{\otimes n}$:
\begin{eqnarray}
\Psi &\in& \left[Cl_{1,3}^{+}\right]P \otimes \left[Cl_{1,3}^{+}\right]P \cdots \left[Cl_{1,3}^{+}\right]P \nonumber \\
&=& \left[Cl_{1,3}^{+} \otimes Cl_{1,3}^{+} \otimes \cdots \otimes Cl_{1,3}^{+} \right][P \otimes P \otimes \cdots \otimes P] \nonumber \\
&\equiv & \left[Cl_{1,3}^{+}\right]^{\otimes n}\left[P \right]^{\otimes n}.
\end{eqnarray}
For example, in the Hilbert space, a general bipartite state
\begin{eqnarray}
| \Psi \rangle & = & \left(\alpha_{1}  + i \alpha_{2} \right) |00 \rangle    +
\left(\alpha_{3}  + i \alpha_{4} \right) |01 \rangle  + \left(\alpha_{5}  + i \alpha_{6} \right) |10 \rangle
+ \left(\alpha_{7}  + i \alpha_{8} \right) |11 \rangle \nonumber \\
\end{eqnarray}
corresponds to
\begin{eqnarray} \label{bipartite}
\Psi &=&  [\alpha_{1} (\gamma_{3}\gamma_{0})\otimes (\gamma_{3}\gamma_{0})
+ \alpha_{2}  (\gamma_{3}\gamma_{0} )\otimes (\gamma_{1}\gamma_{0}\gamma_{2}\gamma_{0}) \nonumber \\
& & + \alpha_{3} (\gamma_{3}\gamma_{0})\otimes (\gamma_{1}\gamma_{0})
+ \alpha_{4} (\gamma_{3}\gamma_{0})\otimes (\gamma_{2}\gamma_{0}\gamma_{3}\gamma_{0}) \nonumber \\
& & + \alpha_{5} (\gamma_{1}\gamma_{0})\otimes (\gamma_{3}\gamma_{0})
+ \alpha_{6} (\gamma_{1}\gamma_{0})  \otimes  (\gamma_{1}\gamma_{0}\gamma_{2}\gamma_{0}) \nonumber \\
& & + \alpha_{7} (\gamma_{1}\gamma_{0})\otimes (\gamma_{1}\gamma_{0})
+ \alpha_{8} (\gamma_{1}\gamma_{0})\otimes (\gamma_{2}\gamma_{0}\gamma_{3}\gamma_{0}) ]P^{\otimes 2}.
\end{eqnarray}
with $\alpha_{j} \in \mathbb{R}, \ j= \left\lbrace 1,2,...,8 \right\rbrace $.

An entangled bipartite state is a state that cannot be written as:
\begin{eqnarray}
\Psi= \Psi_{1} \otimes \Psi_{2}.
\end{eqnarray}
where the $1,2$ indices refer to the $1,2 $ subsystems, respectively; otherwise, the state is said to be factorable. We will use bipartite states for quantum teleportation, with maximally entangled Bell-like states. Since we are using real Clifford algebras, quantum logic gates can be algebraically identified by the reversion operation:
\begin{eqnarray}
\widetilde{\sum_{k_{\nu}}A_{[k_{\nu}]}} \equiv \sum_{k_{\nu}}\widetilde{A}_{[k_{\nu}]}=\sum_{k_{\nu}}(-1)^{k_{\nu}(k_{\nu}-1)/2}A_{[k_{\nu}]}
\end{eqnarray}
 if $\Gamma_{[k]} \in \{\gamma_{0}, \ \gamma_{i}\gamma_{j} \ (i \neq j \neq 0), \ \gamma_{0} \gamma_{i}\gamma_{j}  \ (i \neq j \neq 0), \gamma_{0}\gamma_{1} \gamma_{2}\gamma_{3}\}$ and
\begin{eqnarray}
\widetilde{\sum_{k_{\nu}}A_{[k_{\nu}]}} \equiv \sum_{k_{\nu}}\widetilde{A}_{[k_{\nu}]}=\sum_{k_{\nu}}(-1)^{[k_{\nu}(k_{\nu}-1)-2]/2}A_{[k_{\nu}]},
\end{eqnarray}
otherwise. Note that
\begin{eqnarray}
\textrm{\stackon[-8pt]{$ \sum_{k_{\nu}}A_{[k_{\nu}]}\sum_{k_{\nu}}B_{[k_{\nu}]} $}{\vstretch{1.5}{\hstretch{2.9}{\widetilde{\phantom{\qquad \qquad }}}}} }
  & = & \widetilde{\sum_{k_{\nu}}B_{[k_{\nu}]}}\widetilde{\sum_{k_{\nu}}A_{[k_{\nu}]}}
\end{eqnarray}
Generalizing for $n$ subsystems, we have
\begin{eqnarray}
\widetilde{\left[\left(\sum_{k_{\nu_{i}}}A_{[k_{\nu_{i}}]}\right)^{\otimes n}\right]}\equiv \left(\sum_{k_{\nu_{1}}}\widetilde{A}_{[k_{\nu_{1}}]} \right) \otimes \left(\sum_{k_{\nu_{2}}}\widetilde{A}_{[k_{\nu_{2}}]} \right) \otimes \cdots \otimes \left(\sum_{k_{\nu_{n}}}\widetilde{A}_{[k_{\nu_{n}}]} \right).
\end{eqnarray}
The condition analogous to unitarity is given by:
\begin{eqnarray}
\left[\left(\sum_{k_{\nu_{i}}}A_{[k_{\nu_{i}}]}\right)^{\otimes n}\right] \widetilde{\left[\left(\sum_{k_{\nu_{i}}}A_{[k_{\nu_{i}}]}\right)^{\otimes n}\right]}=1^{\otimes n}.
\end{eqnarray}
It is noteworthy that the above definition corresponds to unitary operators that are factorable into unitary operators. In the transposition into Hilbert's space language these facts are evident. We will explore these ideas in later sections.

\subsection{Representation of braid group and quantum teleportation} \label{sec4}
Next we will show how we can construct unitary representations of the braid group using Clifford algebra $Cl_ {1,3}$. Moreover, we will deduce an equation for quantum teleportation in this formalism. To do so, consider the exponential of a multivector $\Gamma \in Cl_ {p, q}$ i.e.
\begin{eqnarray}
\exp \Gamma = \sum_{n=0}^{\infty}\frac{\Gamma^{n}}{n!}.
\end{eqnarray}

We define the elements $B_{1}^{Cl_{1,3}^{+}}$ and $B_{2}^{Cl_{1,3}^{+}}$ as
\begin{eqnarray}
B_{1}^{Cl_{1,3}^{+}} & = & \exp \left(\frac{\pi}{4}\gamma_{1}\gamma_{0}\gamma_{2}\gamma_{0}\right) \\
B_{2}^{Cl_{1,3}^{+}} & = & \exp \left(\frac{\pi}{4}\gamma_{2}\gamma_{0}\gamma_{3}\gamma_{0}\right) ,
\end{eqnarray}
which implies
\begin{eqnarray}
B_{1}^{Cl_{1,3}^{+}} &=&\frac{1}{\sqrt{2}}\left( 1+\gamma_{1}\gamma_{0}\gamma_{2}\gamma_{0} \right) \\
B_{2}^{Cl_{1,3}^{+}} &=&\frac{1}{\sqrt{2}}\left( 1+\gamma_{2}\gamma_{0}\gamma_{3}\gamma_{0} \right),
\end{eqnarray}
since  $(\gamma_{1}\gamma_{0}\gamma_{2}\gamma_{0})^{2}=-1$ and $(\gamma_{2}\gamma_{0}\gamma_{3}\gamma_{0})^{2}=-1$. The elements $B_{1}^{Cl_{1,3}^{+}}$ and $B_{2}^{Cl_{1,3}^{+}}$ don't commute but obey the following property:
\begin{eqnarray}
B_{1}^{Cl_{1,3}^{+}}B_{2}^{Cl_{1,3}^{+}}B_{1}^{Cl_{1,3}^{+}}=B_{2}^{Cl_{1,3}^{+}}B_{1}^{Cl_{1,3}^{+}}B_{2}^{Cl_{1,3}^{+}}. \label{eq.tranca}
\end{eqnarray}
In fact,

\begin{eqnarray}
B_{1}^{Cl_{1,3}^{+}}B_{2}^{Cl_{1,3}^{+}}B_{1}^{Cl_{1,3}^{+}} & = & \frac{1}{2} \left( 1+\gamma_{1}\gamma_{0}\gamma_{2}\gamma_{0} \right)  \left(1+\gamma_{2}\gamma_{0}\gamma_{3}\gamma_{0} \right) B_{1}^{Cl_{1,3}^{+}} \nonumber \\
& = & \frac{1}{2\sqrt{2}}  \left(1 + \gamma_{1}\gamma_{0}\gamma_{2}\gamma_{0} + \gamma_{2}\gamma_{0}\gamma_{3}\gamma_{0}  + \gamma_{1}\gamma_{0}\gamma_{3}\gamma_{0} \right) \left( 1+\gamma_{1}\gamma_{0}\gamma_{2}\gamma_{0} \right) \nonumber \\
& = &  \frac{1}{\sqrt{2}}  \left(\gamma_{1}\gamma_{0}\gamma_{2}\gamma_{0} + \gamma_{2}\gamma_{0}\gamma_{3}\gamma_{0}  \right) ,
\end{eqnarray}
and
\begin{eqnarray}
B_{2}^{Cl_{1,3}^{+}}B_{1}^{Cl_{1,3}^{+}}B_{2}^{Cl_{1,3}^{+}} & = & \frac{1}{2}\left( 1+\gamma_{2}\gamma_{0}\gamma_{3}\gamma_{0} \right)\left( 1+\gamma_{1}\gamma_{0}\gamma_{2}\gamma_{0} \right) B_{2}^{Cl_{1,3}^{+}} \nonumber  \\
 & = & \frac{1}{2\sqrt{2}}  \left(1 + \gamma_{2}\gamma_{0}\gamma_{3}\gamma_{0} + \gamma_{1}\gamma_{0}\gamma_{2}\gamma_{0}  - \gamma_{1}\gamma_{0}\gamma_{3}\gamma_{0} \right) \left( 1+\gamma_{2}\gamma_{0}\gamma_{3}\gamma_{0} \right) \nonumber \\
& = & \frac{1}{\sqrt{2}}  \left(\gamma_{1}\gamma_{0}\gamma_{2}\gamma_{0} + \gamma_{2}\gamma_{0}\gamma_{3}\gamma_{0}  \right) .
\end{eqnarray}

Therefore $B_{1}^{Cl_{1,3}^{+}}$ and $B_{2}^{Cl_{1,3}^{+}}$, satisfying equation (\ref{eq.tranca}). A representation of these elements in terms of Pauli's matrices and satisfying equation (\ref{eq.tranca}), can be associated with generators of a braid group, first devised in 1925 by Emil Artin \cite{artin}. Considering the Pauli's matrices
\begin{eqnarray}
\sigma _{1}=\left(
\begin{array}{cc}
0 & 1 \\
1 & 0%
\end{array}%
\right), \ \
\sigma _{2}=\left(
\begin{array}{cc}
0 & -i \\
i & 0%
\end{array}%
\right), \ \
\sigma _{3}=\left(
\begin{array}{cc}
1 & 0 \\
0 & -1%
\end{array}%
\right),
\end{eqnarray}
we can write operators $B_{1}^{Cl_{1,3}^{+}}$ and $B_{2}^{Cl_{1,3}^{+}}$ as
\begin{eqnarray}
B _{1} & = & \frac{1}{\sqrt{2}} \left( I +i \sigma_{3}\right) \nonumber \\
       & = & \frac{1}{\sqrt{2}}\left( \begin{array}{cc}
1+i & 0 \\
0 & 1-i%
\end{array}%
\right), \end{eqnarray}
\begin{eqnarray}
B _{2} & = & \frac{1}{\sqrt{2}} \left( 1+ i\sigma_{1}  \right) \nonumber \\
       & = & \frac{1}{\sqrt{2}}\left( \begin{array}{cc}
1 & i \\
i & 1
\end{array} %
\right).
\end{eqnarray}
with $det B_{1} = 1$ and $det B_{2} = 1$ . The action of these operators on qubits in the usual formulation is given by:
\begin{eqnarray}
B_{1}| 0 \rangle &=& \frac{(1+i)}{\sqrt{2}} | 0 \rangle \nonumber \\
B_{1}| 1 \rangle &=& \frac{(1-i)}{\sqrt{2}} | 1 \rangle \nonumber \\
B_{2}| 0 \rangle &=& \frac{1}{\sqrt{2}} (| 0 \rangle + i| 1 \rangle) \nonumber \\
B_{2}| 1 \rangle &=& \frac{1}{\sqrt{2}} (i| 0 \rangle + | 1 \rangle). \label{eq-b1b2}
\end{eqnarray}

Now, from equation (\ref{eq-rel}) we have
\begin{eqnarray}
 \gamma_{1}\gamma_{0} \gamma_{2}\gamma_{0} & \leftrightarrow &  \sigma_{1}\sigma_{2} = i\sigma_{3} \nonumber \\
 \gamma_{2}\gamma_{0} \gamma_{3}\gamma_{0} & \leftrightarrow &  \sigma_{2}\sigma_{3} = i\sigma_{1} .
\end{eqnarray}
So, within our construction, we can analyze the action of these operators in terms of algebraic spinors, that it,
\begin{eqnarray}
B_{1}^{Cl_{1,3}^{+}} \gamma_{3}\gamma_{0}P &=& \frac{1}{\sqrt{2}} \gamma_{3}\gamma_{0}P  + \frac{1}{\sqrt{2}} \gamma_{1}\gamma_{0}\gamma_{2}\gamma_{0}\gamma_{3}\gamma_{0} P \nonumber \\
& = & \frac{1}{\sqrt{2}} \left(\gamma_{3}\gamma_{0} + \gamma_{1}\gamma_{0}\gamma_{2}\gamma_{0} \right)P, \label{eq-b10}
\\
B_{1}^{Cl_{1,3}^{+}}\gamma_{1}\gamma_{0}P & = & \frac{1}{\sqrt{2}} \gamma_{1}\gamma_{0}P +
\frac{1}{\sqrt{2}} \gamma_{1}\gamma_{0}\gamma_{2}\gamma_{0}\gamma_{1}\gamma_{0} P \nonumber \\
&=& \frac{1}{\sqrt{2}} (\gamma_{1}\gamma_{0}-\gamma_{2}\gamma_{0}\gamma_{3}\gamma_{0})P , \label{eq-b11}
\\
B_{2}^{Cl_{1,3}^{+}}\gamma_{3}\gamma_{0}P & = &  \frac{1}{\sqrt{2}} \gamma_{3}\gamma_{0}P  + \frac{1}{\sqrt{2}} \gamma_{2}\gamma_{0}\gamma_{3}\gamma_{0}\gamma_{3}\gamma_{0} P \nonumber \\
&=& \frac{1}{\sqrt{2}} (\gamma_{3}\gamma_{0}+\gamma_{2}\gamma_{0}\gamma_{3}\gamma_{0})P,
\\
B_{2}^{Cl_{1,3}^{+}}\gamma_{1}\gamma_{0}P & = & \frac{1}{\sqrt{2}} \gamma_{1}\gamma_{0}P +
\frac{1}{\sqrt{2}} \gamma_{2}\gamma_{0}\gamma_{3}\gamma_{0}\gamma_{1}\gamma_{0} P \nonumber \\
&=& \frac{1}{\sqrt{2}} (\gamma_{1}\gamma_{0}\gamma_{2}\gamma_{0}+\gamma_{1}\gamma_{0})P,
\end{eqnarray}
getting algebric relations equivalent to equations (\ref{eq-b1b2}), where we use the anticommutation relations of the elements $\displaystyle \gamma_{i}$, where $\left(\gamma_{3}\gamma_{0}\right)^{2}=1$ and  $\displaystyle \gamma_{3}\gamma_{0}P = \gamma_{3}\gamma_{0} \frac{1}{2}(1+\gamma_{3}\gamma_{0}) = P$.

The description for $n$ subsystems is given from the tensor product of the operators.
\begin{eqnarray}
B_{i}^{Cl_{1,3}^{+}} \otimes B_{i}^{Cl_{1,3}^{+}} \otimes \cdots \otimes B_{i}^{Cl_{1,3}^{+}} & \equiv & \left(B_{i}^{Cl_{1,3}^{+}}\right)^{\otimes n}, \ \ \ i=1,2
\end{eqnarray}
which are also braid group generators:
\begin{eqnarray}
\left(B_{1}^{Cl_{1,3}^{+}}\right)^{\otimes n} \left(B_{2}^{Cl_{1,3}^{+}}\right)^{\otimes n}\left(B_{1}^{Cl_{1,3}^{+}}\right)^{\otimes n}=\left(B_{2}^{Cl_{1,3}^{+}}\right)^{\otimes n} \left(B_{1}^{Cl_{1,3}^{+}}\right)^{\otimes n}\left(B_{2}^{Cl_{1,3}^{+}}\right)^{\otimes n}.
\end{eqnarray}

Note that the norm of the elements $B_{1}^{Cl_{1,3}^{+}}$ and $B_{2}^{Cl_{1,3}^{+}}$ is defined from the reversion operation
\begin{eqnarray}
\left| B_{1}^{Cl_{1,3}^{+}} \right|^{2} & = & \langle \widetilde{B_{1}^{Cl_{1,3}^{+}}} B_{1}^{Cl_{1,3}^{+}}  \rangle_{0} \nonumber \\
& = & \frac{1}{2} \langle (1-\gamma_{1}\gamma_{0}\gamma_{2}\gamma_{0}) (1+\gamma_{1}\gamma_{0}\gamma_{2}\gamma_{0}) \rangle_{0} \nonumber \\
& = & 1, \\
\left| B_{2}^{Cl_{1,3}^{+}} \right|^{2} & = & \langle \widetilde{B_{2}^{Cl_{1,3}^{+}}} B_{2}^{Cl_{1,3}^{+}}  \rangle_{0} \nonumber \\
& = & \frac{1}{2} \langle (1-\gamma_{2}\gamma_{0}\gamma_{3}\gamma_{0}) (1+\gamma_{2}\gamma_{0}\gamma_{3}\gamma_{0}) \rangle_{0} \nonumber \\
& = & 1,
\end{eqnarray}
which implies by eq. (\ref{eq4}) that $B_{1}^{Cl_{1,3}^{+}} \in spin_{+}(1,3)$ and $B_{2}^{Cl_{1,3}^{+}} \in spin_{+}(1,3)$. Consequently $\langle B_{1}^{Cl_{1,3}^{+}}, B_{2}^{Cl_{1,3}^{+}} \rangle \leq spin_{+}(1,3)$, i.e. , $\langle B_{1}^{Cl_{1,3}^{+}}, B_{2}^{Cl_{1,3}^{+}} \rangle $ is a finite subgroup of $spin_{+}(1,3)$. For $n$ subsystems 
\begin{eqnarray}
\left( B_{i}^{Cl_{1,3}^{+}} \right)^{\otimes n} \in spin_{+}(1,3) \times spin_{+}(1,3) \times \cdots \times spin_{+}(1,3)
\end{eqnarray}
and associated Lie algebra
\begin{eqnarray}
 Lie \left[spin_{+}(1,3)\right]^{\otimes n} = \oplus_{i=1}^{n}Lie [spin_{+}(1,3)].
\end{eqnarray}

Following our proposed algebraic construction, the Bell states \cite{bell64}
\begin{eqnarray}
|\Phi^{+} \rangle & = & \frac{1}{\sqrt{2}} \left( |0\rangle \otimes |0\rangle + |1\rangle \otimes |1\rangle   \right) \nonumber \\
|\Phi^{-} \rangle & = & \frac{1}{\sqrt{2}} \left( |0\rangle \otimes |0\rangle - |1\rangle \otimes |1\rangle   \right) \nonumber \\
 |\Psi^{+} \rangle & = & \frac{1}{\sqrt{2}} \left( |0\rangle \otimes |1\rangle + |1\rangle \otimes |0\rangle   \right) \nonumber \\
|\Psi^{-} \rangle & = & \frac{1}{\sqrt{2}} \left( |0\rangle \otimes |1\rangle - |1\rangle \otimes |0\rangle   \right), \label{eq-bell}
\end{eqnarray}
are written algebraically as
\begin{eqnarray}
\Phi^{+Cl_{1,3}}&=&\frac{1}{\sqrt{2}}(\gamma_{3}\gamma_{0}\otimes \gamma_{3}\gamma_{0}+\gamma_{1}\gamma_{0}\otimes \gamma_{1}\gamma_{0})P^{\otimes 2} \nonumber \\
\Phi^{-Cl_{1,3}}&=&\frac{1}{\sqrt{2}}(\gamma_{3}\gamma_{0}\otimes \gamma_{3}\gamma_{0}-\gamma_{1}\gamma_{0}\otimes \gamma_{1}\gamma_{0})P^{\otimes 2} \nonumber \\
\Psi^{+Cl_{1,3}}&=&\frac{1}{\sqrt{2}}(\gamma_{3}\gamma_{0}\otimes \gamma_{1}\gamma_{0}+\gamma_{1}\gamma_{0}\otimes \gamma_{3}\gamma_{0})P^{\otimes 2} \nonumber \\
\Psi^{-Cl_{1,3}}&=&\frac{1}{\sqrt{2}}(\gamma_{3}\gamma_{0}\otimes \gamma_{1}\gamma_{0}-\gamma_{1}\gamma_{0}\otimes \gamma_{3}\gamma_{0})P^{\otimes 2}.
\end{eqnarray}

Analyzing operator action $(B_{1}^{Cl_{1,3}^{+} })^{\otimes 2}$ on the Bell states $\Psi^{+Cl_{1,3}}$ and $\Psi^{-Cl_{1,3}}$, in the algebraic formulation, we have
\begin{eqnarray}
(B_{1}^{Cl_{1,3}^{+} })^{\otimes 2} \Psi^{\pm Cl_{1,3}}
& = & \frac{1}{\sqrt{2}} (B_{1}^{Cl_{1,3}^{+} })^{\otimes 2} \gamma_{3}\gamma_{0} P\otimes \gamma_{1}\gamma_{0} P \pm
\frac{1}{\sqrt{2}} (B_{1}^{Cl_{1,3}^{+} })^{\otimes 2}  \gamma_{1}\gamma_{0} P \otimes \gamma_{3}\gamma_{0} P \nonumber \\
   & = & \frac{1}{2\sqrt{2}}\left[ \left( \gamma_{3}\gamma_{0} + \iota \gamma_{3}\gamma_{0} \right)P\otimes\left( \gamma_{1}\gamma_{0} - \iota \gamma_{1}\gamma_{0} \right)P  \right]  \nonumber \\
   &  & \pm \frac{1}{2\sqrt{2}}\left[ \left( \gamma_{1}\gamma_{0} - \iota \gamma_{1}\gamma_{0} \right)P  \right] \otimes
    \left( \gamma_{3}\gamma_{0} + \iota \gamma_{3}\gamma_{0} \right)P \nonumber \\
   & = &  \frac{1}{2\sqrt{2}} \left[\left( 1+ \iota \right)\otimes \left( 1 - \iota \right) \right] \left[ \gamma_{3}\gamma_{0}P \otimes \gamma_{1}\gamma_{0} P\right]    \nonumber \\
&  &  \pm \frac{1}{2\sqrt{2}} \left[\left( 1+ \iota \right)\otimes \left( 1 - \iota \right) \right] \left[ \gamma_{1}\gamma_{0}P \otimes \gamma_{3}\gamma_{0} P\right]  \nonumber \\
& = &
\frac{1}{2} \left[\left( 1+ \iota \right)\otimes \left( 1 - \iota \right) \right] \frac{1}{\sqrt{2}}
\left[ \gamma_{3}\gamma_{0} \otimes \gamma_{1}\gamma_{0} P^{\otimes 2}   \pm  \gamma_{1}\gamma_{0} \otimes \gamma_{3}\gamma_{0} P^{\otimes 2}\right]  \nonumber \\
& = & \Psi^{\pm Cl_{1,3}}, \label{eq63}
\end{eqnarray}
where we have used equations (\ref{eq-b10}) and (\ref{eq-b11}), and the following equivalence for the representations of the states $i|01\rangle$ and $i|10\rangle$,
\begin{eqnarray}
 i|01\rangle & \leftrightarrow & \left\lbrace \begin{array}{l}
\iota \gamma_{3} \gamma_{0} \otimes \gamma_{1}\gamma_{0} P^{\otimes 2} \ = \ \gamma_{1}\gamma_{0}\gamma_{2}\gamma_{0}P \otimes \gamma_{1}\gamma_{0} P \\
\gamma_{3} \gamma_{0} \otimes \iota  \gamma_{1}\gamma_{0} P^{\otimes 2} \ = \ \gamma_{3}\gamma_{0}P \otimes \gamma_{2}\gamma_{0}\gamma_{3}\gamma_{0} P ,
\end{array}\right.  \\
 i|10\rangle & \leftrightarrow & \left\lbrace \begin{array}{l}
\iota \gamma_{1} \gamma_{0} \otimes \gamma_{3}\gamma_{0} P^{\otimes 2} \ = \ \gamma_{2}\gamma_{0}\gamma_{3}\gamma_{0}P \otimes \gamma_{3}\gamma_{0} P \\
\gamma_{1} \gamma_{0} \otimes \iota  \gamma_{3}\gamma_{0} P^{\otimes 2} \ = \ \gamma_{1}\gamma_{0}P \otimes \gamma_{1}\gamma_{0}\gamma_{2}\gamma_{0} P .
\end{array}\right.
\end{eqnarray}

Eq. (\ref{eq63}) indicates that the operator $[B_{1}^{Cl_{1,3}^{+}  }]^{\otimes 2}$ leaves the states $\Psi^{+Cl_{1,3}^{+}}$ and $\Psi^{-Cl_{1,3}^{+}}$ invariant, however the operator $[B_{2}^{Cl_{1,3}^{+}  }]^{\otimes 2}$ don't have the same property.
Concluding the teleportation equation \cite{Preskill} can be written in terms of the algebraic elements of the $Cl_{1,3}^{+}$ as:
\begin{eqnarray}
\psi^{Cl_{1,3}^{+}}_{C}\Psi^{+Cl_{1,3}^{+}}_{AB}
& = &
\frac{1}{\sqrt{2}}\left[a \gamma_{3}\gamma_{0}+b \gamma_{1}\gamma_{0} \right]\otimes \left[(\gamma_{3} \gamma_{0})\otimes (\gamma_{1} \gamma_{0}) +
(\gamma_{1} \gamma_{0}) \otimes (\gamma_{3} \gamma_{0}) \right]P^{\otimes 3} \nonumber \\
& = & \frac{1}{\sqrt{2}}\left[
  a (\gamma_{3}\gamma_{0}) \otimes (\gamma_{3} \gamma_{0}) \otimes (\gamma_{1} \gamma_{0})
+ a (\gamma_{3}\gamma_{0}) \otimes (\gamma_{1} \gamma_{0}) \otimes (\gamma_{3} \gamma_{0}) \right. \nonumber \\ & & \left.
+ b (\gamma_{1}\gamma_{0}) \otimes (\gamma_{3} \gamma_{0}) \otimes (\gamma_{1} \gamma_{0})
+ b (\gamma_{1}\gamma_{0}) \otimes (\gamma_{1} \gamma_{0}) \otimes (\gamma_{3} \gamma_{0}) \right]P^{\otimes 3}  \nonumber \\
& = & \frac{1}{2} \left[
  a (\Phi^{+}_{CA} + \Phi^{-}_{CA} )  \otimes (\gamma_{1} \gamma_{0})
+ a (\Psi^{+}_{CA} + \Psi^{-}_{CA} )  \otimes (\gamma_{3} \gamma_{0})
\right. \nonumber \\ & & \left.
+ b (\Psi^{+}_{CA} - \Psi^{-}_{CA} )  \otimes (\gamma_{1} \gamma_{0})
+ b (\Phi^{+}_{CA} - \Phi^{-}_{CA} )  \otimes (\gamma_{3} \gamma_{0})  \right] P^{\otimes 3} \nonumber \\
& = & \frac{1}{2} \left[
  \Phi^{+}_{CA} \otimes (a\gamma_{1} \gamma_{0} + b\gamma_{3} \gamma_{0}) +
  \Phi^{-}_{CA} \otimes (a\gamma_{1} \gamma_{0} - b\gamma_{3} \gamma_{0})
\right. \nonumber \\ & & \left.
+  \Psi^{+}_{CA}   \otimes (a\gamma_{3} \gamma_{0} + b\gamma_{1} \gamma_{0}  )
+  \Psi^{-}_{CA}   \otimes (a\gamma_{3} \gamma_{0} - b\gamma_{1} \gamma_{0} )  \right] P^{\otimes 3} \nonumber \\
& = & \frac{1}{2} \left[
   \Phi^{+Cl_{1,3}^{+}}_{CA} \otimes (\gamma_{1}\gamma_{0}) \psi^{Cl_{1,3}^{+}}_{B} +
   \Phi^{-Cl_{1,3}^{+}}_{CA} \otimes (\gamma_{1} \gamma_{0}\gamma_{3} \gamma_{0}) \psi^{Cl_{1,3}^{+}}_{B} \right. \nonumber \\  & & \left.
 + \Psi^{+Cl_{1,3}^{+}}_{CA} \otimes \psi^{Cl_{1,3}^{+}}_{B} 
 + \Psi^{-Cl_{1,3}^{+}}_{CA} \otimes (\gamma_{3}\gamma_{0}) \psi^{Cl_{1,3}^{+}}_{B} \right] P^{\otimes 3} ,
\end{eqnarray}
 where $|\psi \rangle \leftrightarrow \psi^{Cl_{1,3}^{+}}_{C} = \left( a \gamma_{3}\gamma_{0} + b \gamma_{1}\gamma_{0} \right) P $, with $a,b \in \mathbb{R}$, is the state of $C$ subsystem to be teleported.  In addition we use the algebraic representation of the states in a Bell's base, where $\Psi^{+Cl_{1,3}^{+}}_{CA}$,  $\Psi^{-Cl_{1,3}^{+}}_{CA}$, $\Phi^{+Cl_{1,3}^{+}}_{CA}$, $\Phi^{+Cl_{1,3}^{+}}_{CA} $ and $\Psi^{+Cl_{1,3}^{+}}_{AB}$ are the Bell states on the subsystems $CA$ and $AB$.


These operators appear in the context of topological quantum computation. Several experimental achievements are associated with Majorana fermions. However, the construction outlined herein does not incorporate such particles. In the next section, we will present a prescription for the description of Majorana fermions through Clifford algebra $Cl^{+}_{1,3}$.


\subsection{Majorana operator and supersymmetry} \label{sec5}
Majorana's fermions have aroused much interest in potential applications in topological quantum computing. In these systems there is a ground state degeneracy that allows natural protection against decoherence. Our goal is to show that we can describe a system of three Majorana fermions using Clifford algebra $ Cl_ {1,3} $. Our formulation follows the prescription set out in the references \cite{Lee, Yu, livrowilczek}.

Majorana operators satisfy the Clifford algebra
\begin{eqnarray}
\{\Gamma_{i}, \Gamma_{j}\}=2 \delta_{i,j}, \ \ \ \textrm{ with } i=1,2,3 \ ,
\end{eqnarray}
where a realization of these operators, within our construction, is given by
\begin{eqnarray}
\Gamma_{1} & \equiv & \gamma_{1}\gamma_{0} \otimes 1 \nonumber \\
\Gamma_{2} & \equiv & \gamma_{3}\gamma_{0} \otimes 1 \nonumber \\
\Gamma_{3} & \equiv & \gamma_{2}\gamma_{0} \otimes \gamma_{1}\gamma_{0}.
\end{eqnarray}

We can build a Hamiltonian through three Majorana fermions
\begin{eqnarray}
H^{Cl_{1,3}^{+},M}=-(\iota \otimes 1) \left[a(\gamma_{1}\gamma_{0} \otimes 1)+b(\gamma_{3}\gamma_{0} \otimes 1)+c(\gamma_{2}\gamma_{0} \otimes \gamma_{1}\gamma_{0}) \right] ,
\end{eqnarray}
with the operator $P^{Cl_{1,3}^{+},M}$, that implements the electron number parity,
\begin{eqnarray}
P^{Cl_{1,3}^{+},M}=\gamma_{2}\gamma_{0} \otimes \gamma_{3}\gamma_{0} \label{eq-paridade}
\end{eqnarray}
and the emergent Majorana operator,
\begin{eqnarray}
\Gamma^{e;Cl_{1,3}^{+},M}=-(\iota \otimes 1)(\gamma_{1}\gamma_{0} \otimes 1)(\gamma_{3}\gamma_{0} \otimes 1)(\gamma_{2}\gamma_{0} \otimes \gamma_{2}\gamma_{0}) \label{eq-emergente}
\end{eqnarray}
establish a complete algebra for the description of Majorana's fermions.

By using operators (\ref{eq-paridade}) and (\ref{eq-emergente}),  Lee and Wilczek \cite{Lee} showed that there is a doubling of degeneracy at any energy level and not just in the ground state. This is a consequence of the following relationships:
\begin{eqnarray}
(\Gamma^{e;Cl_{1,3}^{+},M})^{2} & = & 1 \\
\left[ \Gamma^{e;Cl_{1,3}^{+},M}, \Gamma_{j}^{e;Cl_{1,3}^{+},M} \right]  & = &  0 , \ \ \ \textrm{ with } i=1,2,3 \\
\left[ \Gamma^{e;Cl_{1,3}^{+},M}, H^{Cl_{1,3}^{+},M} \right]  & = &  0 \\
\left\lbrace \Gamma^{e;Cl_{1,3}^{+},M},P^{Cl_{1,3}^{+},M} \right\rbrace & = & 0
\end{eqnarray}

On the other hand the generators of braid group are given by
\begin{eqnarray}
B_{1}^{Cl_{1,3}^{+},M}&=&\exp[\theta (\gamma_{1}\gamma_{0} \gamma_{3}\gamma_{0} \otimes 1)] \nonumber \\
B_{2}^{Cl_{1,3}^{+},M}&=&\exp[\theta (\gamma_{3}\gamma_{0} \gamma_{2}\gamma_{0} \otimes \gamma_{1}\gamma_{0})]
\end{eqnarray}
satisfying the following relation
\begin{eqnarray}
B_{1}^{Cl_{1,3}^{+},M}B_{2}^{Cl_{1,3}^{+},M}B_{1}^{Cl_{1,3}^{+},M}=B_{2}^{Cl_{1,3}^{+},M}B_{1}^{Cl_{1,3}^{+},M}B_{2}^{Cl_{1,3}^{+},M},
\end{eqnarray}
and $\left[ B_{i}^{Cl_{1,3}^{+},M}, \Gamma^{e;Cl_{1,3}^{+},M} \right] = 0$ for $j=1,2$, which indicates that the emergent Majorana operator, $\Gamma^{e;Cl_{1,3}^{+},M}$, plays a role as a braid group invariant.


We can easily verify that in our formulation the operator
\begin{eqnarray}
Q^{Cl^{+}_{1,3},M}=\frac{(H^{Cl_{1,3}^{+},M})^{1/2}\Gamma^{Cl_{1,3}^{+},M}}{4}\left(1+P^{Cl_{1,3}^{+},M}\right);
\end{eqnarray}
satisfies the relation
\begin{eqnarray}
[Q^{Cl^{+}_{1,3},M}, H^{Q^{Cl^{+}_{1,3},M}}]=0; \ \  \left(Q^{Cl^{+}_{1,3},M})\right)^{2}=0
\end{eqnarray}
and
\begin{eqnarray}
\{Q^{Cl^{+}_{1,3},M},(Q^{Cl^{+}_{1,3},M})^{\dagger}\}=2H^{Cl_{1,3}^{+},M},
\end{eqnarray}
that is, we have a supersymmetry algebra with $Q^{Cl^{+}_{1,3},M}$ the charge operator. This algebra corresponds here to that introduced by Haq and Kaufmann using Majorana operators \cite{Haq}. We should note that the element $(H^{Cl_{1,3}^{+},M})^{1/2}$ has an operator structure as described in section 2 and it is self-adjoint, so the definition is valid taking into account the spectral theorem. We will now show that the existence of supersymmetry algebra $(N = 2)$ implies characteristics of the eigenstates of Hamiltonian. Initially note that
\begin{eqnarray}
\{Q^{Cl^{+}_{1,3},M},P^{Cl_{1,3}^{+},M}\}=0; \ \ [Q^{Cl^{+}_{1,3},M},H^{Cl^{+}_{1,3},M}]=0
\end{eqnarray}
Suppose an algebraic spinor $\Psi^{Cl^{+}_{1,3},M}_{0}$ an eigenstate of $H^{Cl^{+}_{1,3},M}$ with parity $+1$. We have that
\begin{eqnarray}
\Psi^{Cl^{+}_{1,3},M}_{1}=Q^{Cl^{+}_{1,3},M}\Psi^{Cl^{+}_{1,3},M}_{0}
\end{eqnarray}
is an eigenstate of $H^{Cl^{+}_{1,3},M}$ orthogonal to $\Psi^{Cl^{+}_{1,3},M}_{0}$. In fact
\begin{eqnarray}
\left\langle \left( \Psi^{Cl^{+ }_{1,3},M   }_{0}\right)^{\star}  \ Q^{Cl^{+}_{1,3},M} \Psi^{Cl^{+}_{1,3},M}_{0} \right\rangle_{0} &=& \left\langle \left( \Psi^{Cl^{+}_{1,3},M   }_{0} \right)^{\star} \ Q^{Cl^{+}_{1,3},M} P^{Cl_{1,3}^{+},M}  \Psi^{Cl^{+}_{1,3},M}_{0} \right\rangle_{0} \nonumber \\
& = & \left\langle \left( \Psi^{Cl^{+}_{1,3},M  }_{0} \right)^{\star} \ P^{Cl_{1,3}^{+},M} Q^{Cl^{+}_{1,3},M}   \Psi^{Cl^{+}_{1,3},M}_{0} \right\rangle_{0} \nonumber \\
&=&- \left\langle \left( \Psi^{Cl^{+}_{1,3},M  }_{0} \right)^{\star} \ Q^{Cl^{+}_{1,3},M} \Psi^{Cl^{+}_{1,3},M}_{0} \right\rangle_{0} ,
\end{eqnarray}
so
\begin{eqnarray}
\left\langle \left( \Psi^{Cl^{+}_{1,3},M}_{0} \right)^{\star} Q^{Cl^{+}_{1,3},M}  \Psi^{Cl^{+}_{1,3},M}_{0} \right\rangle_{0} &=& 0,
\end{eqnarray}
taking into account that $Q^{Cl^{+}_{1,3},M}\Psi^{Cl^{+}_{1,3},M}_{0}\neq 0$ for a state with parity $1$. This show the degeneracy of the Hamiltonian.

\section{Conclusions} \label{sec6}
In this paper, we derived a formulation for q-bit multipartite systems using algebraic spinors in Clifford algebra $ Cl_ {1,3}^{+}$. Our formalism involves the tensor product of algebras as well as the concept of minimal left ideals, i.e. we derived such a formulation by constructing an algebra product generated from tensor algebra product from the quotient of a tensor algebra by a two-side ideal. The universality of this algebra has been demonstrated. We note that this prescription can be generalized to the tensor product of any Clifford algebras.  In addition we show a structure of $c^{\star}$-algebra within of $Cl_{1,3}^{+}$ algebra. We note also that the derived operators are hermitian or unitary corresponding to observables and quantum logic gates, respectively.
All states and operators are described in terms of the algebra generators. Thus it is possible to perform a more efficient operationalization, ratifying several proposals already existing in the literature.
We showed, for example, how entangled states and braid group representations, essential structures in topological quantum computing proposals can be constructed algebraically; in this context we then derived an algebraic equation for quantum teleportation. We presented Majorana operators and an associated supersymmetry algebra using our algebraic formulation and we show a relationship between supersymmetry and degeneracy of energy levels.  In short, the use of Clifford algebras and algebraic spinors provides a natural, elegant and effective path to the field of quantum computing and quantum information. Therefore, we believe that such results may be useful in future developments of quantum computing, particularly in topological quantum computing and in system simulations for relativistic quantum mechanics and high energy physics.


\end{document}